\newcommand{\MC}{\mathbf{C}}
\newcommand{\AS}{\mathcal{S}}
\newcommand{\K}{\mathcal{Q}}
\newcommand{\Z}{\mathcal{P}}
\newcommand{\ZC}{\mathcal S^*_\tl G}
\newcommand{\FF}{\mathbb F}
\newcommand{\GG}{\mathcal G}
\newcommand{\tl}[1]{\mathbf {#1}}
\newcommand{\hor}[1]{^{(#1)}}
\newcommand{\uv}[1]{ ``#1'' }
\DeclareMathOperator{\del}{{\rm del}}
\DeclareMathOperator{\intdeg}{intdeg}
\DeclareMathOperator{\extdeg}{extdeg}
\DeclareMathOperator{\mdeg}{Mcdeg}
\newtheorem{corollary}{Corollary}
\newtheorem{theorem}{Theorem}
\theoremstyle{remark}
\newtheorem{remark}{Remark}
\begin{document}

\title{State spaces of convolutional codes, codings and encoders}

\author{\v St\v ep\' an Holub}

\email{holub@karlin.mff.cuni.cz}

\begin{abstract}
In this paper we give a compact presentation of the theory of abstract spaces for convolutional codes and convolutional encoders, and show a connection between them that seems to be missing in the literature. We use it for a short proof of two facts: the size of a convolutional encoder of a polynomial matrix is at least its inner degree, and the minimal encoder has the size of the external degree if the matrix is reduced. 
\end{abstract}

\maketitle

The theory of convolutional codes obtained a solid mathematical foundations with the work of David Forney \cite{Forney70, Forney73, Forney75}. The algebraic nature of the theory is the description of matrices over the field of rational sequences and/or over the ring of polynomials. Since then, any exposition of the theory consists of some mixture of this (linear) algebraic theory and more practical approach concerned with physical realization of linear time invariant systems. A classical example is the monograph by Johanesson and Zigangirov from 1999, recently republished in the second edition \cite{JohZig}. An important contribution represents the chapter by McEliece in Handbook of Coding Theory \cite{McEliece}, which features several relieving terminological reforms we adopt.
A more recent survey was given by Forney \cite{Forney11}.

One of the main ingredients of the algebraic approach is the concept of \uv{state space}, used already by Forney \cite{Forney70}, and developed by Willems in a wider context of dynamical systems (see \cite{Willems89} for a tutorial exposition). A closer look, however, leaves an impression that the theory of state spaces has so far not been explained in a way that fully exploits its strength. In particular, it seems that the relation between state space of the code and state space of the encoding is nowhere explicitly worked out in the literature. For example, the above book \cite{JohZig} does not define the state space of the code at all. On the other hand, McEliece's chapter uses mainly the state space of the code and speaks about the encoding state space only with respect to the abstract encoder (defined by four matrices over the basic field), not with respect to the generating matrix. 

In this paper we give a compact presentation of the theory of abstract spaces, and point out a simple, yet powerful, connection between them. In order to illustrate that we don't just juggle with terminology, we present a short proof of the fact that the size of a convolutional encoder of a polynomial matrix is at least the inner degree of the matrix (Corollary \ref{col}), which is considered by McEliece ``too long to include'' \cite[p. 1095]{McEliece}, although his exposition contains almost all tools needed for the proof.

 Using the same formulation as McEliece in the Preface to his text, we can say that this paper intends to add even more \uv{spin} to the theory introduced by Forney.  

\section{Notation}
Let $\FF$ be a (finite) field. An element $\tl u$ of the field $\FF((D))$ of {Laurent series} over $\FF$ is a series $\tl u = \sum_{j=z}^\infty u_j D^j$ over a formal variable $D$, where $z > -\infty$ is the least integer such that $u_z\neq 0$ called the \emph{delay} of $\tl u$, denoted by $\del \tl u$. The \emph{degree} of $\tl u$, denoted by $\deg \tl u$, is the largest index $j$ for which $u_j\neq 0$ (possibly $\infty$). (Note that $\del \tl 0=\infty$ and $\deg \tl 0=-\infty$.) We  avoid the tedious use of the variable $D$ in our notation: instead of $\tl u(D)$, the fact that we deal with a series is indicated by the boldface. Important subsets of $\FF((D))$ are the ring of polynomials $\FF[D]$, and its fraction field, the set of rational series $\FF(D)$. Let $\K(\tl u)=\sum_{j=0}^\infty u_j D^j$ be the \emph{causal} part of $\tl u$ and $\Z(\tl u)=\sum_{j=\del \tl u}^{-1} u_j D^j$ its \emph{anticausal} part. The set of causal series is the ring of power series $\FF[[D]]$. 
Bold face variables will also denote vectors of series. If  $\tl u \in \FF((D))^n$ or $\tl u \in \FF(D)^n$ , then $\tl u = \left(\tl u\hor 1, \tl u \hor 2,\dots,\tl u \hor n\right)$. 

\section{Levels of abstractness, state spaces and minimality}\label{minimality}

An $(n,k)$ convolutional code can be seen on three levels of increasing definiteness and/or decreasing abstractness: a \emph{code}, an \emph{encoding} and an \emph{encoder}.

1. The first level, called simply a (convolutional) \emph{code}, is the set $\MC$  of codewords, which is  by definition a $k$-dimensional subspace of the vector space $\FF((D))^n$ with $k$ dimensional subspace in $\FF(D)^n$, that is, generated by a basis from $\FF(D)^n$.

2. An \emph{encoding} is a \emph{realizable} linear mapping $\FF((D))^n \to \MC$ which assigns codewords to messages. Such an encoding is defined by the choice of a causal rational basis  of the space $\MC$ or, equivalently, by a \emph{generating  matrix} $\tl G$, which is a $k\times n$ matrix of rank $k$
whose rows
$\tl g_1,\tl g_2,\dots, \tl g_k \in \FF(D)^n$ generate $\MC$, and $\tl g_{i}\hor j$ are causal rational series. An information vector  $\tl u\in \FF((D))^k$ (a message) is mapped to the codeword $\tl u \tl G\in \FF((D))^n$.    

3. The encoding can be realized in real-time by an \emph{encoder} which is a transducer characterized by a finite vector space (over $\FF$)  of states $\AS$ and by a linear mapping $K:\AS\times \FF^k \to \AS\times \FF^n$. Given a state $s$ and an input $\vec u\in \FF^k$, the encoder outputs $\vec v\in \FF^n$ and enters a state $s'$, where $(s',\vec v)=K(s,\vec u)$. Since $K$ is linear, it can be expressed as a matrix of size $(\dim \AS + k)\times (\dim \AS + n)$ over $\FF$, which is composed of four matrices $\mathcal A,\mathcal  B,\mathcal  C, \mathcal  D$ such that $s' = s\, \mathcal  A + \vec u\, \mathcal  B$ and $v = s\, \mathcal  C + \vec u\, \mathcal D$. 
The set of states is then $\AS=\FF^m$, where $m=\dim \AS$. 
The encoder is typically realized by a linear circuit  which is  built from adders, multipliers and memory (or delay) elements. Then $m$ is the number of delay elements, and a state $s\in \FF^m$ is the content of encoder's memory.
\medskip

The main point is that each of the above three levels can be assigned a finite dimensional vector space of \emph{states} whose dimension is the corresponding \emph{degree}. We can therefore distinguish (with growing abstractness)  \emph{encoder state space}, \emph{encoding state space} and \emph{code state space}, as well as \emph{encoder degree}, \emph{encoding degree} and \emph{code degree}. 
\medskip

\begin{remark}
In the literature, terminology varies. The circuit realization is called a \uv{physical encoder} by McEliece \cite{McEliece}, or a \uv{physical realization} of \uv{abstract linear encoder}, where the latter is identified with the generating matrix, that is, with the encoding. McEliece points out that a matrix can have different physical realizations (even minimal ones). At the same time, however, the term \uv{convolutional encoder} without  qualification is used for the encoder defined by four matrices as above.  Those four matrices are also called \uv{state-space realization} of the matrix $\tl G$, this time without pointing out that the same matrix can have different state-space realizations. Johanesson and Zigangirov use the term \uv{encoder} directly for the circuit. 
\end{remark}
\medskip

  The state space of the encoder is explicit in its definition.
 On the other hand, it is not obvious what a \uv{state of an encoding}, or  \uv{state of a matrix $\tl G$} should be. In order to define the encoding state space, the encoding should be seen as the \uv{behaviour} of an encoder. The encoding state at a given instant of time is therefore naturally defined by the mapping of future inputs to future outputs. Since we deal with time invariant systems, it is convenient to denote the \uv{given instant} as time zero, and the state is then the mapping $\FF^k[[D]] \to \FF^n[[D]]$ of causal series. We are interested only in \uv{reachable} states, that is, those corresponding to some anticausal input $\tl v$.   
 Due to linearity, we have that for the causal input $\tl w\in \FF[[D]]^k$ the causal output will be 
\[\K(\tl v\tl G)+\tl w \tl G,\] 
which is the behaviour of the initial state modified by the addition of $\K(\tl v\tl G)$.
This suggests that we can define the \emph{encoding state space} $\Sigma_{\tl G}$ as the factor space
\[\Sigma_{\tl G}:=\FF((D))/\ZC\,,\]
where
\[\ZC=\{\tl u \in \FF((D)) \mid \K(\Z(\tl u) \tl G)=\tl 0\}\,.\]  
We denote elements of $\Sigma_\tl G$ by $[\tl u]_\tl G:= \tl u +\ZC$. Note, in particular, that $[\tl u]_{\tl G}= [\Z(\tl u)]_{\tl G}$ for any $\tl u \in \FF((D))$.
The set $\ZC$ is a vector space over the basic field $\FF$ (rather than over the field $\FF(D)$ of rational functions). Consequently, also $\Sigma_{\tl G}$ and $\FF((D))$ in the above definition are understood as vector spaces over $\FF$.  
Our definition of $\Sigma_{\tl G}$ may seem less intuitive than the 
definition used by Forney \cite{Forney70}, and Johanesson and Zigangirov \cite{JohZig}:
\[\Sigma_{\tl G}':=\{\K(\Z(\tl u)\tl G) \mid \tl u\in \FF((D))\}\,.\]
However, $\Sigma_{\tl G}$ and $\Sigma_{\tl G}'$ are isomorphic via the isomorphism $[\tl u]_\tl G\mapsto \K(\Z(\tl u)\tl G)$. The advantage of our definition will become clear with the definition of the mapping $\GG$ below.

One more step of the abstraction  is needed to introduce code state space which should be independent even  of a particular encoding.
The idea is illustrated  by the well-known construction of the minimal automaton in the theory of regular languages, when states are given by classes of the Myhill-Nerode equivalence, and thus defined purely in terms of the accepted language. In our case, it is natural to define \uv{code state zero} as the set 
\[\MC^*=\{\tl v \in \MC \mid \K(\tl v)\in \MC\}\]     
of causal code words,  again a vector space over  $\FF$. The \emph{state space} $\Sigma_\MC$ of $\MC$ is now defined as the factor space 
\[\Sigma_\MC:=\MC/\MC^*\,,\]
and the elements of $\Sigma_\MC$ are denoted by  $[\tl v]_\MC:= \tl v +\MC^*$. 

\begin{remark}\label{rem2}
This definition (and notation) is used by McEliece \cite[p. 1096]{McEliece} who uses the term \uv{abstract state space}. 
The same definition is used also in Forney \cite{Forney11}, where instead of $\MC^*$, the notation $\MC_- \times \MC_+$ is used (moreover indexed by the particular time instant for systems which may not be time invariant). 
Beware that \uv{abstract state space} denotes $\Sigma_{\tl G}'$ in Forney \cite{Forney70} as well as in Johanesson and Zigangirov \cite[p. 90]{JohZig}.
We remark that the ``behavioral'' approach leads to a slightly different definition 
\[\Sigma_\MC':=\K(\MC)/\MC_\K\,,\]
where
\begin{align*}
\K(\MC)&=\{\K(\tl u) \mid \tl u \in\MC\}, & \MC_\K&=\{\tl u \in \MC \mid \tl u=\K(\tl u)\}\,.
\end{align*}
This is the definition used (somewhat implicitly, see our concluding remarks) by Forney \cite{Forney70}.  
However, it is straightforward that $\K$ induces an isomorphism between $\Sigma_\MC$ and $\Sigma_\MC'$, which is also noted in Forney \cite{Forney11}.
\end{remark}

\medskip
An encoder is called \emph{minimal} if it has the smallest possible degree among all encoders realizing a given encoding. The degree of the minimal encoder of a generating matrix $\tl G$ is called the \emph{McMillan degree} of $\tl G$ (\cite[p. 1095]{McEliece}), denoted $\mdeg \tl G$. 
Since the state of an encoding is given by the (future) behavior of the encoder, it is obvious that the degree of the minimal encoder is at least the degree of the corresponding encoding.
On the other hand, the state space $\Sigma_{\tl G}$ can be used as a state space of an encoder $K_{\tl G}$ which could be called \uv{standard}. If $s = [\tl u]_{\tl G}$,  then $(s',\vec v)=K_{\tl G}(s, \vec u)$ where $s'= [\tl u D^{-1} + \vec u]_{\tl G}$ and $\vec v$ is the absolute component of $(\Z(\tl u) + \vec u)\tl G$. It is easy to see that $K_{\tl G}$ is linear and well defined. 
Therefore, the McMillan degree of $\tl G$ is in fact the dimension of the space $\Sigma_{\tl G}$: $$\deg \Sigma_{\tl G}= \mdeg \tl G\,.$$

Similarly, an encoding (a generating matrix $\tl G$) is called \emph{minimal} if $\dim \Sigma_{\tl G}$ is the smallest possible among all encodings defining the same code. A simple tool for studying minimal encodings is the linear mapping 
\begin{align*}
\GG: \Sigma_{\tl G}&\to \Sigma_\MC \\
     [\tl u]_{\tl G}&\mapsto [\tl u \tl G]_{\MC}. 
\end{align*}
Let us verify that $\GG$ is well defined, which means that $\tl G$ maps $\ZC$ into $\MC^*$. For $\tl u\in \ZC$, we have
\[\K(\tl u\tl G)=\K(\Z(\tl u)\tl G) + \K(\K(\tl u)\tl G)=\K(\tl u)\tl G \]
and therefore $\tl u \tl G$ is indeed in $\MC^*$.
 We now have 
 \[
 \dim \Sigma_\MC \leq \dim \Sigma_{\tl G}\,,
 \]
 and $\tl G$ is minimal if $\GG$ is injective.
This directly translates into the equivalent criterion ``Only the zero abstract state of $\tl G$ can be a codeword'' (see Theorem 2.37 (iii) \cite{JohZig}). 

\begin{remark}
An encoder is a minimal possible encoder of a code if it is a minimal encoder of a minimal encoding. However, the minimal possible encoder can be constructed directly from $\Sigma_{\MC}$ in a way similar to the above construction of the \uv{standard encoder}. Such a minimal realization was studied by Willems and is called \uv{canonical realization} by Forney \cite{Forney11}.   	
\end{remark}

 \section{Polynomial generating matrices}
Consider a polynomial generating matrix $\tl G$. Define the degree of the $i$th row as $$\deg \tl g_i= \max_j \deg \tl g_i\hor j,$$ and the \emph{external degree} of $\tl G$ as $\extdeg \tl G=\sum_{i=1}^k \deg \tl g_i$. Assume, without loss of generality, that $\deg \tl g_1\leq \deg \tl g_2\leq \cdots \leq \deg \tl g_k$.  It is well known (and not difficult to see) that, for a given code $\MC$, $\extdeg \tl G$ is minimized if and only if $\deg \tl g_i=e_i$ for a uniquely given $k$-tuple $(e_1,e_2,\dots, e_k)$ of \emph{Forney indeces} of $\MC$. Then $\tl G$ is called a \emph{canonical} generating matrix of $\MC$.

Each polynomial matrix $\tl G$ admits a circuit realization whose degree is $\extdeg \tl G$ (this is the most obvious realization called \uv{direct-form}  or \uv{controller canonical form} \cite{McEliece, JohZig}). Using Section \ref{minimality}, we therefore have 
\[ \dim \Sigma_{\MC} \leq \mdeg \tl G \leq \extdeg \tl G\,. \]

One of the fundamental results of the Forney theory is the  State Space Theorem, which claims that  \[ \dim \Sigma_\MC = \sum_{i=1}^k e_i.\] This means that canonical matrices define minimal encodings of a given code, and moreover, also their direct-form realization is the minimal one.

A standard characterization of canonical matrices is the following: A polynomial matrix is canonical if and only if it is \emph{basic}, and \emph{reduced}. Each of these two properties is captured by several equivalent conditions. 

I. Being reduced is related to the external degree of the matrix. A matrix $\tl G$ is reduced if its external degree cannot be lowered by adding polynomial multiples of some rows to another row. Equivalently, a reduced matrix of $\MC$ is obtained from a polynomial generating matrix by multiplication by a unimodular matrix (polynomial matrix with the determinant in $\FF$) minimizing the external degree. The matrix $\tl G$ is reduced if and only if it has \emph{the predictable degree property} which states that
\[\deg \tl u \tl G = \max_{i=1,\dots,k} \deg \tl u\hor i \tl g_i\]
for each $\tl u\in \FF((D))^k$.

II. Being basic is related to the \emph{internal degree}  of the matrix, which is the largest degree of its $k\times k$ minors. A polynomial generating matrix $\tl G$ of $\MC$ is basic if its internal degree $\intdeg \tl G$ is minimal among all polynomial generating matrices of $\MC$.  Another equivalent characterization is that $\tl G$ is basic if and only if it has a polynomial right inverse.

It is also not difficult to see that $\intdeg \tl G\leq \extdeg \tl G$ for any polynomial matrix, and that $\intdeg \tl G= \extdeg \tl G$ if and only if $\tl G$ is reduced.

The above claims can be found in any textbook about convolutional codes (see \cite[Appendix A]{McEliece} for a longer list of equivalent conditions) and we will use them in what follows.

\medskip

We formulate the State Space Theorem in a way that reveals the main idea of the proof, which closely follows McEliece \cite{McEliece}. The idea is to consider a particular basis of $\Sigma_{\MC}$ defined by the canonical matrix $\tl G$.   
\begin{theorem}\label{statespace}
Let $\tl G$ be a canonical generating matrix of $\MC$. Then \[\{[D^{-j}\tl g_i]_\MC \mid i=1,\dots,k, \ j=1,\dots,e_i\}\] is a basis of $\Sigma_\MC$.
\end{theorem}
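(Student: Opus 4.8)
The plan is to check the two defining properties of a basis for the proposed set, which has exactly $\sum_{i=1}^k e_i$ elements: that it spans $\Sigma_\MC$ and that it is linearly independent over $\FF$. Both parts rely on the polynomiality of $\tl G$ together with the two characterizing properties of a canonical matrix, namely that it is \emph{basic} (possesses a polynomial right inverse) and \emph{reduced} (satisfies the predictable degree property). A preliminary observation driving everything is that for $j>e_i$ the codeword $D^{-j}\tl g_i$ (a codeword since $\MC$ is closed under multiplication by $D^{-j}$) is purely anticausal, because $\deg\tl g_i=e_i$; hence $\K(D^{-j}\tl g_i)=\tl 0\in\MC$, so $D^{-j}\tl g_i\in\MC^*$ and $[D^{-j}\tl g_i]_\MC=0$.

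For spanning, I would take an arbitrary codeword $\tl v=\tl u\tl G$ with $\tl u\in\FF((D))^k$ and reduce it modulo $\MC^*$. Writing $\tl u\hor i=\Z(\tl u\hor i)+\K(\tl u\hor i)$, the causal contribution $\K(\tl u)\tl G$ is a causal codeword, hence lies in $\MC^*$ and disappears in $\Sigma_\MC$. Each anticausal part is a finite sum $\Z(\tl u\hor i)=\sum_{j\ge 1}u\hor i_{-j}D^{-j}$, so
\[ [\tl v]_\MC=\sum_{i=1}^k\sum_{j\ge 1}u\hor i_{-j}\,[D^{-j}\tl g_i]_\MC . \]
By the preliminary observation the terms with $j>e_i$ vanish, leaving a linear combination of the proposed basis elements; thus the set spans.

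Linear independence is the crux, and the step I expect to be the main obstacle. Suppose $\sum_{i,j}c_{i,j}\,[D^{-j}\tl g_i]_\MC=0$, and put $p_i=\sum_{j=1}^{e_i}c_{i,j}D^{-j}$ and $\tl w=\tl p\tl G$, so that $\tl w\in\MC^*$, i.e. $\K(\tl w)\in\MC$. As $\tl w$ is a finite Laurent combination of polynomials, $\K(\tl w)$ is a genuine polynomial codeword, and basicness lets me write $\K(\tl w)=\tl q\tl G$ with $\tl q$ \emph{polynomial}. Then $(\tl p-\tl q)\tl G=\tl w-\K(\tl w)=\Z(\tl w)$ is anticausal, so its degree is at most $-1$; applying the predictable degree property to $\tl p-\tl q$ gives $\deg(p_i-q_i)+e_i\le -1$, that is $\deg(p_i-q_i)\le -1-e_i$ for every $i$.

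From this one inequality both conclusions drop out. Since $p_i-q_i$ is then anticausal its causal part is $0$, and because $p_i$ is anticausal while $q_i$ is polynomial this forces $q_i=0$. With $q_i=0$ the inequality reads $\deg p_i\le -1-e_i$, whereas $p_i$ is supported on the powers $D^{-e_i},\dots,D^{-1}$ and so has degree $\ge -e_i$ when nonzero; as $-e_i>-1-e_i$, we get $p_i=0$, hence all $c_{i,j}=0$. The delicate point is exactly here: both reducedness (through the predictable degree property) and basicness (to keep $\tl q$ polynomial, so that comparing causal parts is legitimate) are indispensable, and dropping either would break the degree bookkeeping. The stated basis, and with it the State Space Theorem $\dim\Sigma_\MC=\sum_{i=1}^k e_i$, follows.
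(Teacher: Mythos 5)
Your proof is correct and takes essentially the same approach as the paper's: spanning is obtained by splitting off the causal part of the message and noting that the overly anticausal contributions (the terms $D^{-j}\tl g_i$ with $j>e_i$) vanish in $\Sigma_\MC$, and linear independence combines basicness (a polynomial right inverse, making the causal comparison vector $\tl q$ polynomial) with reducedness (the predictable degree property) exactly as in the paper. The only differences are cosmetic: the paper uses a three-way decomposition $\tl u_1+\tl u_2+\tl u_3$ where you use a two-way split plus a preliminary vanishing observation, and your per-component degree bound $\deg(p_i-q_i)\le -1-e_i$ replaces the paper's direct contradiction between $\deg(\tl u-\tl w)\tl G<0$ and $\deg(\tl u-\tl w)\hor{i}\tl g_i\ge 0$.
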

\begin{proof}
	Take an arbitrary element of $\MC$ and write it as $\tl u \tl G$. Decompose $\tl u=\tl u_1 + \tl u_2 + \tl u_3$ so that
	\begin{align*}
\deg \tl u_1\hor i & < -e_i,  \\
\del \tl u_2\hor i &\geq -e_i, \quad \deg \tl u_2\hor i<0,\\
\del \tl u_3\hor i &\geq 0\,.   
	\end{align*}
We have $\K(\tl u_1\tl G)=\tl 0\in \MC^*$ and $\K(\tl u_3\tl G)=\tl u_3\tl G\in \MC^*$, 
which implies $[\tl u\tl G]_\MC=[\tl u_2\tl G]_\MC$. The vector $\tl u_2\tl G$ is a linear combination of vectors $D^{-j}\tl g_i$ (over $\FF$), 
namely
\[
\tl u_2\tl G = \sum_{i=1}^b \sum_{j=1}^{e_i} u_j\hor i D^{-j}\tl g_i\,.
\]
We have shown that vectors $[D^{-j}\tl g_i]_\MC$ generate $\Sigma_\MC$.
	
It remains to show linear independence. Let therefore
\[\sum_{i=1}^b \sum_{j=1}^{e_i} u_j\hor i  D^{-j}\tl g_i \in \MC^*. \]  
This can be written as
\[\K(\tl u \tl G) = \tl w \tl G\]
for some $\tl w$, where $$\tl u\hor i = \sum_{j=-e_j}^{-1} u_j\hor i D^{j}.$$ 
Since $\tl G$ has a polynomial right inverse $\tl G'$, we have $\tl w=\K(\tl u \tl G)\tl G'$, hence $\tl w$ is polynomial. Then $\deg (\tl u - \tl w)\hor i \tl g_i \geq 0$, unless $(\tl u - \tl w)\hor i = \tl 0$. On the other hand, we have $\K((\tl u-\tl w)\tl G)=\tl 0$, which means that $\deg (\tl u-\tl w)\tl G < 0$. 
The predictable degree property of $\tl G$ now implies that $\tl u-\tl w=\tl 0$, therefore also $\tl u=\tl w=\tl 0$, since $\tl u$ is anticausal and $\tl w$ is causal. This completes the proof.
\
\end{proof}

We now turn our attention to
 bases of $\Sigma_{\tl G}$. For any reduced matrix $\tl G$, there is a characterization of a basis of $\Sigma_{\tl G}$ which is quite analogical to the State Space Theorem above. Nevertheless, while the State Space Theorem has become a classical result, a clear exposition of the following parallel result seems to be missing in the literature.

We denote by $\tl e_i$ the $i$th canonical vector (which should be not confused with the $i$th Forney index $e_i$).
\begin{theorem} \label{reduced}
Let $\tl G$ be a reduced generator matrix with $\deg \tl g_i = \nu_i$, $i=1,2,\dots,k$. Then the set
 \[\{[D^{-j}\tl e_i]_\tl G \mid i=1,\dots,k, \ j=1,\dots,\nu_i\}\] 
is a basis of $\Sigma_\tl G$.
\end{theorem}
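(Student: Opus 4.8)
The plan is to mirror the structure of the proof of Theorem~\ref{statespace}, but to replace every appeal to the polynomial right inverse (the \emph{basic} property) by a direct use of the predictable degree property, which is the only hypothesis available here. Throughout I would use the observation recorded above that $[\tl u]_{\tl G}=[\Z(\tl u)]_{\tl G}$, so that every class has an anticausal representative, together with the fact that for an anticausal $\tl u$ one has $\tl u \in \ZC$ if and only if $\deg \tl u\tl G<0$: indeed then $\Z(\tl u)=\tl u$, and $\K(\tl u\tl G)=\tl 0$ is equivalent to $\tl u\tl G$ being strictly anticausal.

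For spanning, I would take an arbitrary class, represent it by an anticausal $\tl u$, and split each component as $\tl u\hor i = \tl a\hor i + \tl b\hor i$, where $\tl a\hor i=\sum_{j=1}^{\nu_i} u_{-j}\hor i D^{-j}$ collects the powers $D^{-1},\dots,D^{-\nu_i}$ and $\tl b\hor i$ collects the strictly lower powers, so that $\deg \tl b\hor i < -\nu_i$ whenever $\tl b\hor i\neq\tl 0$. The predictable degree property then gives $\deg \tl b\tl G=\max_i(\deg \tl b\hor i+\nu_i)<0$, so $\tl b\in\ZC$ and $[\tl u]_{\tl G}=[\tl a]_{\tl G}$. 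Since $\tl a=\sum_{i=1}^k\sum_{j=1}^{\nu_i} u_{-j}\hor i D^{-j}\tl e_i$, this exhibits $[\tl u]_{\tl G}$ as an $\FF$-linear combination of the proposed vectors, establishing that they generate $\Sigma_{\tl G}$.

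For linear independence, I would suppose $\tl u=\sum_{i=1}^k\sum_{j=1}^{\nu_i} c_{i,j}D^{-j}\tl e_i\in\ZC$ and show that all $c_{i,j}=0$. Here each component $\tl u\hor i$ is supported on the powers $D^{-1},\dots,D^{-\nu_i}$, so $-\nu_i\le \deg \tl u\hor i\le -1$ whenever $\tl u\hor i\neq\tl 0$, and hence $\deg \tl u\hor i+\nu_i\ge 0$. By the predictable degree property $\deg \tl u\tl G=\max_i(\deg \tl u\hor i+\nu_i)$, so if any component were nonzero we would obtain $\deg \tl u\tl G\ge 0$, contradicting $\tl u\in\ZC$. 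Thus every $\tl u\hor i=\tl 0$ and all coefficients vanish.

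The computations are routine once the bookkeeping of degrees is fixed, so I do not expect a serious obstacle; the only point requiring care is the sign convention in the degree inequalities, and the observation that, in contrast with Theorem~\ref{statespace}, no right inverse is needed here. The symmetric role played by the predictable degree property in both halves of the argument is exactly what makes the reduced hypothesis sufficient.
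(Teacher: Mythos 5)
Your proof is correct and follows essentially the same route as the paper's: the splitting of an anticausal representative into the part supported on $D^{-1},\dots,D^{-\nu_i}$ plus the lower-degree remainder is exactly the paper's decomposition $\tl u = \tl u_1+\tl u_2+\tl u_3$ (your preliminary reduction to $\Z(\tl u)$ plays the role of $\tl u_3$), and your linear independence argument via the predictable degree property is the paper's verbatim. The only cosmetic difference is that you invoke the predictable degree property also in the spanning half, where the trivial inequality $\deg \tl b\tl G \leq \max_i \deg \tl b\hor i\tl g_i$ already suffices, as the paper implicitly uses.
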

\begin{proof}
The proof closely resembles the proof of Theorem \ref{statespace}. 
Let  $\tl u=\tl u_1 + \tl u_2 + \tl u_3$ be a decomposition of an arbitrary $\tl u\in \FF((D))$ given by 
	\begin{align*}
\deg \tl u_1\hor i & < -\nu_i,  \\
\del \tl u_2\hor i &\geq -\nu_i, \quad \deg \tl u_2\hor i<0,\\
\del \tl u_3\hor i &\geq 0\,.   
	\end{align*}
 From the definition of $\ZC$, we obtain $[\tl u]_\tl G=[\tl u_2]_\tl G$ which implies that vectors $[D^{-j}\tl e_i]_\tl G$ generate $\Sigma_\tl G$. 

To show linear independence, consider $\tl u\in \ZC$ where 
\[\tl u\hor i=\sum_{j=-\nu_i}^{-1} u_j\hor i D^{j}\tl e_i. \]
Since $\tl u=\Z(\tl u)$, we deduce from  $\tl u\in \ZC$ that $\K(\tl u \tl G) = \tl 0$, and the predictable degree property of $\tl G$ implies $\tl u=\tl 0$.
\end{proof}

For general polynomial generating matrices, a variant of the linear independence part of the proof can be carried out, providing a lower bound in terms of an equivalent reduced matrix.

\begin{theorem}\label{notr}
Let $\tl G$ be a polynomial generator matrix and let $\tl T$ be a unimodular matrix such that $\tl T\tl G$ is a reduced matrix with row degrees $(\nu_1,\dots,\nu_k)$. Then the set
 \[\{[D^{-j}\tl t_i]_\tl G \mid i=1,\dots,k, \ j=1,\dots,\nu_i\},\] 
where $\tl t_i$ are rows of $\tl T$, is linearly independent subset of $\Sigma_{\tl G}$.
\end{theorem}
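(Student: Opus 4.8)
The plan is to mimic the linear-independence half of the proof of Theorem \ref{reduced}, the essential new difficulty being that the rows $\tl t_i$ are honest polynomials rather than the canonical vectors $\tl e_i$. So suppose a nontrivial $\FF$-combination of the listed classes vanishes in $\Sigma_{\tl G}$, i.e.
\[\tl u := \sum_{i=1}^{k}\sum_{j=1}^{\nu_i} u\hor i_j\, D^{-j}\tl t_i \in \ZC.\]
Packing the coefficients into the row vector $\vek a=(\tl a\hor 1,\dots,\tl a\hor k)$ with $\tl a\hor i=\sum_{j=1}^{\nu_i} u\hor i_j D^{-j}$, each $\tl a\hor i$ is strictly anticausal with $-\nu_i\le \deg \tl a\hor i\le -1$, and $\tl u=\vek a\tl T$. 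Writing $\tl H:=\tl T\tl G$ for the reduced matrix, whose rows $\tl h_i=\tl t_i\tl G$ have degree $\nu_i$, we get $\tl u\tl G=\vek a\tl H$. The goal is to show $\vek a=\tl 0$, which forces every $u\hor i_j=0$.

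First I would rewrite the condition $\tl u\in\ZC$. Since $\Z(\tl u)\tl G=\tl u\tl G-\K(\tl u)\tl G$ and $\K(\tl u)\tl G$ is causal (both factors are polynomial), taking causal parts turns $\K(\Z(\tl u)\tl G)=\tl 0$ into $\K(\vek a\tl H)=\K(\tl u)\tl G$. The decisive step — and the place where $\tl t_i\neq\tl e_i$ really bites — is that $\tl u=\vek a\tl T$ is generally not anticausal, so the predictable degree property cannot be applied to $\tl u\tl G$ as it was in Theorem \ref{reduced}. Instead I would push the causal part back through $\tl H$: using $\tl G=\tl T^{-1}\tl H$ with $\tl T^{-1}$ polynomial, set $\vek b:=\K(\tl u)\tl T^{-1}$, a causal row vector. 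Then $\K(\tl u)\tl G=\vek b\tl H$, so the identity reads $\vek b\tl H=\K(\vek a\tl H)$ and therefore
\[(\vek a-\vek b)\tl H=\vek a\tl H-\K(\vek a\tl H)=\Z(\vek a\tl H),\]
which has strictly negative degree.

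Finally I would invoke the predictable degree property of the reduced matrix $\tl H$ on $(\vek a-\vek b)\tl H$: for each $i$ with $\tl a\hor i-\tl b\hor i\ne\tl 0$ the $i$-th term has degree $\deg(\tl a\hor i-\tl b\hor i)+\nu_i$. A nonzero causal $\tl b\hor i$ makes this at least $\nu_i\ge 0$, while $\tl b\hor i=\tl 0$ with $\tl a\hor i\ne\tl 0$ still gives at least $0$ because $\deg\tl a\hor i\ge -\nu_i$; both possibilities contradict $\deg(\vek a-\vek b)\tl H<0$. Hence $\vek a=\vek b$, and since $\vek a$ is strictly anticausal while $\vek b$ is causal, both are $\tl 0$; in particular $\vek a=\tl 0$.

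I expect the main obstacle to be exactly this middle maneuver: the straightforward degree count of Theorem \ref{reduced} breaks because $\vek a\tl T$ acquires a causal part, and the remedy is to subtract $\vek b=\K(\tl u)\tl T^{-1}$ so that the predictable degree property is applied to the reduced matrix $\tl H=\tl T\tl G$ rather than to $\tl G$ itself, which need not be reduced.
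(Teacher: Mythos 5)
Your proof is correct and takes essentially the same route as the paper's: under the dictionary $\vek a \leftrightarrow \tl u$, $\tl u \leftrightarrow \tl u'$, $\vek b \leftrightarrow \tl w$, $\tl H \leftrightarrow \tl T\tl G$, your key maneuver of subtracting $\vek b = \K(\vek a \tl T)\tl T^{-1}$ (polynomial because $\tl T^{-1}$ is) so that the predictable degree property is applied to the reduced matrix $\tl T\tl G$ rather than to $\tl G$ is exactly the paper's argument, ending in the same ``anticausal equals causal, hence zero'' conclusion. The only difference is cosmetic: you spell out the final degree case analysis (the cases $\tl b\hor i \neq \tl 0$ and $\tl b\hor i = \tl 0$) that the paper leaves implicit.
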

\begin{proof}
	Let $\tl u'\in \ZC$ for some 
\[\tl u'=\sum_{i=1}^b \sum_{j=1}^{\nu_i} u_j\hor i D^{-j}\tl t_i. \]
Then $\tl u'=\tl u \tl T$, where
\[\tl u\hor i=\sum_{j=-\nu_i}^{-1} u_j\hor i D^{j},  \]
 and 
\begin{align*}
\tl 0 & =\K(\Z(\tl u \tl T)\tl G)= \K((\tl u\tl T - \K(\tl u\tl T)) \tl G)\\ &=\K(\tl u \tl T \tl G) - \K(\tl u\tl T) \tl G.
\end{align*}

From this and from $\tl u=\Z(\tl u)$ we obtain that $\K((\Z(\tl u) - \tl w) \tl T \tl G) = \tl 0$, where $\tl w=\K(\tl u\tl T)\tl T^{-1}$. Since both $\tl T$ and $\tl T^{-1}$ are polynomial, we deduce that also $\tl w$ is polynomial. The predictable degree property of $\tl T\tl G$ implies that $\Z(\tl u) - \tl w=\tl 0$, and therefore also $\tl u=\tl 0$, and $\tl u'=\tl 0$.  
\end{proof}

\begin{corollary}\label{col}
For any polynomial matrix $\tl G$,
\[\intdeg \tl G\leq\mdeg \tl G \,.\]
If $\tl G$ is reduced, then \[\intdeg \tl G=\mdeg \tl G = \extdeg \tl G\,.\]
\end{corollary}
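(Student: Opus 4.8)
The plan is to obtain both assertions from Theorems~\ref{reduced} and~\ref{notr}, combined with the facts already recorded in the excerpt that $\mdeg \tl G = \dim \Sigma_{\tl G}$ and that $\intdeg \tl G = \extdeg \tl G$ holds exactly when $\tl G$ is reduced. No genuinely new machinery is needed; the corollary is essentially a bookkeeping consequence of the two basis theorems.

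First I would settle the reduced case, which is immediate. If $\tl G$ is reduced with row degrees $\nu_i = \deg \tl g_i$, then Theorem~\ref{reduced} exhibits an explicit basis of $\Sigma_{\tl G}$ of cardinality $\sum_{i=1}^k \nu_i = \extdeg \tl G$. Hence $\mdeg \tl G = \dim \Sigma_{\tl G} = \extdeg \tl G$, and since reducedness also gives $\intdeg \tl G = \extdeg \tl G$, all three quantities coincide. This proves the second display of the corollary.

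For the general lower bound I would use Theorem~\ref{notr}. Choose a unimodular $\tl T$ for which $\tl T\tl G$ is reduced; such a $\tl T$ exists because every polynomial generating matrix can be brought to reduced form by a unimodular left factor. Writing $(\nu_1,\dots,\nu_k)$ for the row degrees of $\tl T\tl G$, Theorem~\ref{notr} produces a linearly independent subset of $\Sigma_{\tl G}$ with $\sum_{i=1}^k \nu_i$ elements, so that $\mdeg \tl G = \dim \Sigma_{\tl G} \geq \sum_{i=1}^k \nu_i$. As $\tl T\tl G$ is reduced, $\sum_{i=1}^k \nu_i = \extdeg(\tl T\tl G) = \intdeg(\tl T\tl G)$, and it remains only to identify $\intdeg(\tl T\tl G)$ with $\intdeg \tl G$.

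That last identification is the one step requiring an argument, and I expect it to be the main (if modest) obstacle. Since $\tl T$ is a $k\times k$ matrix with $\det \tl T \in \FF^*$, every maximal ($k\times k$) minor of $\tl T\tl G$ equals $\det \tl T$ times the corresponding maximal minor of $\tl G$, by multiplicativity of the determinant applied to each choice of $k$ columns. Multiplying by the nonzero constant $\det \tl T$ does not change degrees, so the families of maximal minors of $\tl T\tl G$ and of $\tl G$ have identical degrees, whence $\intdeg(\tl T\tl G) = \intdeg \tl G$. Combining this with the previous paragraph yields $\intdeg \tl G \leq \mdeg \tl G$, the first display, and the corollary follows.
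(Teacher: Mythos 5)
Your proposal is correct, and its first half coincides with the paper's entire proof: the paper establishes both claims via the single chain $\intdeg \tl G = \intdeg \tl T\tl G = \extdeg \tl T\tl G \leq \dim \Sigma_{\tl G} \leq \extdeg \tl G$, where Theorem \ref{notr} supplies the middle inequality, exactly as in your general-case argument. Two points of comparison are worth making. First, the paper asserts $\intdeg \tl G = \intdeg \tl T\tl G$ with no justification beyond unimodularity of $\tl T$; your determinant argument (each maximal minor of $\tl T\tl G$, obtained by selecting $k$ columns, equals $\det \tl T$ times the corresponding minor of $\tl G$, and multiplication by a nonzero constant preserves degree) is precisely the step being invoked, so you have filled in a detail the paper leaves implicit. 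Second, your treatment of the reduced case takes a genuinely different route: you invoke Theorem \ref{reduced}, which exhibits a basis of $\Sigma_{\tl G}$ of cardinality $\extdeg \tl G$ and hence gives the exact equality $\dim \Sigma_{\tl G} = \extdeg \tl G$ in one stroke. The paper, by contrast, never uses Theorem \ref{reduced} in this proof; it lets the displayed chain collapse when $\intdeg \tl G = \extdeg \tl G$, which requires the separate upper bound $\dim \Sigma_{\tl G} \leq \extdeg \tl G$ furnished earlier by the direct-form (controller canonical form) realization. Thus your version makes the reduced case self-contained given Theorem \ref{reduced} and independent of any circuit construction, while the paper's version extracts both assertions from a single inequality chain resting only on Theorem \ref{notr} plus the direct-form bound; the two are equally rigorous, and which one prefers is a matter of economy of dependencies versus economy of argument.
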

\begin{proof}
 Since $\tl T$ is unimodular and $\tl T \tl G$ is reduced, Theorem \ref{notr} yields
	\begin{align*}
\intdeg \tl G &= \intdeg \tl T\tl G =\extdeg \tl T\tl G \\ 
              &\leq \dim \Sigma_{\tl G} \leq \extdeg \tl G\,.
	\end{align*}
The claim follows.	
\end{proof}

\section{Concluding remarks}
We have shown that it is useful to clearly distinguish  the state space of the code from the state space of the encoding, and to consider the relation between the two. Although the relation is as simple as the mapping $\GG$, it has apparently not been used in the literature so far.

A historical reason for this may be that  Forney uses the factor space $\K(\MC)/\FF[[D]]\tl G$ in his seminal paper \cite[Lemma 5]{Forney70} (denoted there as $S_G=CQ/C_G$). This is equivalent to $\K(\MC)/\MC_\K$ if and only if $\tl G$ has a causal right inverse. However, for a general matrix this space is not very useful. In particular, unlike $\Sigma_\MC$, the space $\K(\MC)/\FF[[D]]\tl G$ is not independent of $\tl G$.

McEliece \cite{McEliece} uses explicitly the space $\MC/\MC^*$, and he calls it the \uv{abstract state space} which is an important terminological shift with respect to the usage introduced by Forney \cite{Forney70} (cf. Remark \ref{rem2} above) not pointed out in the \uv{Advice to experts} \cite[Preface, p.1068]{McEliece}. Nevertheless, not even McEliece liberates code degree from a particular encoding, and (following Forney \cite{Forney70}) defines $\deg \MC$ as the minimal possible external degree of a polynomial generating matrix. Such a definition, however, is fully justified only when the State Space Theorem 
is proven.

The present paper can be summarized as a vindication of a different approach: if, as we did, $\deg \MC$ is \emph{defined} as $\dim \Sigma_\MC$, then the mapping $\GG$ immediately shows that we cannot hope for any smaller encoding, let alone for a smaller encoder. The State Space Theorem then shows that this absolute minimum is not only achievable, it is even achieved by direct-form encoders of (canonical) polynomial matrices.

\bibliographystyle{IEEEtran}
\bibliography{IEEEabrv,abstract}

\begin{thebibliography}{1}
\providecommand{\url}[1]{#1}
\csname url@samestyle\endcsname
\providecommand{\newblock}{\relax}
\providecommand{\bibinfo}[2]{#2}
\providecommand{\BIBentrySTDinterwordspacing}{\spaceskip=0pt\relax}
\providecommand{\BIBentryALTinterwordstretchfactor}{4}
\providecommand{\BIBentryALTinterwordspacing}{\spaceskip=\fontdimen2\font plus
\BIBentryALTinterwordstretchfactor\fontdimen3\font minus
  \fontdimen4\font\relax}
\providecommand{\BIBforeignlanguage}[2]{{%
\expandafter\ifx\csname l@#1\endcsname\relax
\typeout{** WARNING: IEEEtran.bst: No hyphenation pattern has been}%
\typeout{** loaded for the language `#1'. Using the pattern for}%
\typeout{** the default language instead.}%
\else
\language=\csname l@#1\endcsname
\fi
#2}}
\providecommand{\BIBdecl}{\relax}
\BIBdecl

\bibitem{Forney70}
G.~D. {Forney Jr.}, ``Convolutional codes {I}: Algebraic structure,''
  \emph{IEEE Transactions on Information Theory}, vol.~16, no.~6, pp. 720--738,
  Nov 1970.

\bibitem{Forney73}
\BIBentryALTinterwordspacing
------, ``Structural analysis of convolutional codes via dual codes,''
  \emph{{IEEE} Trans. Information Theory}, vol.~19, no.~4, pp. 512--518, 1973.
  [Online]. Available: \url{https://doi.org/10.1109/TIT.1973.1055030}
\BIBentrySTDinterwordspacing

\bibitem{Forney75}
------, ``Minimal bases of rational vector spaces, with applications to
  multivariable linear systems,'' \emph{SIAM Journal on Control}, vol.~13,
  no.~3, pp. 493--520, 1975.

\bibitem{JohZig}
R.~Johannesson and K.~S. Zigangirov, \emph{Fundamentals of Convolutional
  Coding}, 2nd~ed.\hskip 1em plus 0.5em minus 0.4em\relax Wiley-IEEE Press,
  2015.

\bibitem{McEliece}
R.~J. McEliece, ``The algebraic theory of convolutional codes,'' in
  \emph{Handbook of coding theory}.\hskip 1em plus 0.5em minus 0.4em\relax
  Amsterdam: North-Holland, 1998, vol.~I, ch.~12, pp. 1065--1138.

\bibitem{Forney11}
\BIBentryALTinterwordspacing
G.~D. {Forney Jr.}, ``Minimal realizations of linear systems: The "shortest
  basis" approach,'' \emph{{IEEE} Trans. Information Theory}, vol.~57, no.~2,
  pp. 726--737, 2011. [Online]. Available:
  \url{https://doi.org/10.1109/TIT.2010.2094811}
\BIBentrySTDinterwordspacing

\bibitem{Willems89}
\BIBentryALTinterwordspacing
J.~C. Willems, \emph{Models for Dynamics}.\hskip 1em plus 0.5em minus
  0.4em\relax Wiesbaden: Vieweg+Teubner Verlag, 1989, pp. 171--269. [Online].
  Available: \url{https://doi.org/10.1007/978-3-322-96657-5\textunderscore 5}
\BIBentrySTDinterwordspacing

\end{thebibliography}

\end{document}